\newtheorem{theorem}{Theorem}[section]
\newtheorem{corollary}{Corollary}[theorem]
\begin{document}

\title{Photon-ALP interaction as a measure of initial photon polarization}

\author{Giorgio Galanti}
\email{gam.galanti@gmail.com}
\affiliation{INAF, Istituto di Astrofisica Spaziale e Fisica Cosmica di Milano, Via Alfonso Corti 12, I -- 20133 Milano, Italy}

\date{\today}

\begin{abstract}

\

\

Axion-like particles (ALPs) are very light, neutral, spin zero bosons predicted by superstring theory. ALPs interact primarily with two photons and in the presence of an external magnetic field they generate photon-ALP oscillations and the change of the polarization state of photons. While well motivated from a theoretical point of view, hints on ALP existence come from astrophysics. In this paper, we state and demonstrate some theorems about a strict relationship between {\it initial} photon polarization and photon-ALP conversion probability -- which can be extrapolated by observed astrophysical spectra -- so that, in the presence of ALPs, flux-measuring observatories become also {\it porarimeters}.

\

\

\end{abstract}

\keywords{axion; polarization}

\pacs{14.80.Mz, 13.88.+e, 95.30.Gv, 95.30.-k, 95.85.Pw, 95.85.Ry, 98.54.Cm, 98.65.Cw, 98.70.Vc}

\maketitle


\section{Introduction}

Many extensions of the standard model of elementary particles such as the superstring theory~\cite{string1,string2,string3,string4,string5,axiverse,abk2010,cicoli2012} invariably predict the existence of axion-like particles (ALPs)~\cite{alp1,alp2}. ALPs are a generalization of the axion, the pseudo-Goldstone boson arising from the breakdown of the global Peccei-Quinn symmetry ${\rm U}(1)_{\rm PQ}$ proposed as solution to the strong CP problem (see e.g.~\cite{axionrev1,axionrev2,axionrev3,axionrev4}). While the axion mass and two-photon coupling are related quantities and axions necessarily interact with fermions and gluons, ALPs interact primarily with two photons with coupling $g_{a\gamma\gamma}$ which is unrelated to the ALP mass $m_a$. Thus, ALPs are very light, neutral, spin zero bosons described by the Langrangian
\begin{eqnarray}
&\displaystyle {\cal L}_{\rm ALP} =  \frac{1}{2} \, \partial^{\mu} a \, \partial_{\mu} a - \frac{1}{2} \, m_a^2 \, a^2 - \, \frac{1}{4 } g_{a\gamma\gamma} \, F_{\mu\nu} \tilde{F}^{\mu\nu} a \nonumber \\
&\displaystyle = \frac{1}{2} \, \partial^{\mu} a \, \partial_{\mu} a - \frac{1}{2} \, m_a^2 \, a^2 + g_{a\gamma\gamma} \, {\bf E} \cdot {\bf B}~a~,
\label{lagr}
\end{eqnarray}
where $a$ denotes the ALP field, $F_{\mu\nu}$ is the electromagnetic tensor, whose dual is expressed by $\tilde{F}^{\mu\nu}$, while $\bf E$ and $\bf B$ are the electric and magnetic components of $F_{\mu\nu}$, respectively. While $\bf E$ represents the propagating photon field, $\bf B$ is the external magnetic field, in whose presence two effects arise: (i) photon-ALP oscillations~\cite{sikivie1983, raffeltstodolsky}, (ii) the change of the polarization state of photons~\cite{mpz, raffeltstodolsky}. ALPs are considered among the strongest candidates to constitute the dark matter for particular values of $m_a$ and $g_{a\gamma\gamma}$~\cite{preskill,abbott,dine,arias2012}. While many constraints on $m_a$ and $g_{a\gamma\gamma}$ are present in the literature~\cite{cast,straniero,fermi2016,payez2015,berg,conlonLim,meyer2020,limFabian,limJulia,limKripp,limRey2}, the firmest one is represented by $g_{a \gamma \gamma} < 0.66 \times 10^{- 10} \, {\rm GeV}^{- 1}$ for $m_a < 0.02 \, {\rm eV}$ at the $2 \sigma$ level arising from no detection of ALPs from the Sun derived by CAST~\cite{cast}.

The strong theoretical motivation for ALP existence is corroborated by many astroparticle studies on ALP consequences in astrophysical background such as: the increase of the Universe transparency for energies above $\sim 100 \, \rm GeV$~\cite{drm,dgr2011,grExt}, the formation of irregularities in observed spectra~\cite{fermi2016,gtre2019,gtl2020,CTAfund}, the modification on stellar evolution~\cite{globclu} (for an incomplete review see e.g.~\cite{gRew}). In addition, ALP-induced polarization effects on photons from astrophysical sources have been studied e.g. in~\cite{bassan,ALPpol1,ALPpol2,ALPpol3,ALPpol4,ALPpol5}. Quite recently, two hints on ALP existence from very-high-energy (VHE) astrophysics have been proposed: ALPs explain why photons coming from flat spectrum radio quasars (a type of active galactic nuclei, AGN) have been observed for energies above $20 \, \rm GeV$~\cite{trgb2012} and they solve an anomalous redshift dependence of blazar (an AGN class) spectra~\cite{grdb}. ALPs have been invoked also to explain a blazar line-like feature~\cite{wang}.

In this paper, we state and demonstrate some theorems about a direct relation between photon-ALP conversion probability $P_{\gamma \to a}$ and {\it initial} photon degree of linear polarization $\Pi_L$. As a result, by only analyzing the behavior of $P_{\gamma \to a}$, which can be extracted from spectral data, the information about the initial $\Pi_L$ can be inferred. Afterwards, we apply our theoretical results to concrete cases showing that
the latter statement does not represent a theoretical experiment only, but it can also be performed in reality, since from observed astrophysical spectra we show how to extrapolate the photon survival/conversion probability. As a result,
 in the presence of efficient photon-ALP interaction, all the observatories, which just measure the source flux, can become {\it polarimeters}.
 
The paper is organized as follows. In Sect. II we review the main properties of ALPs and of the photon-ALP system, in Sect. III we demonstrate some theoretical results concerning a link between  photon-ALP conversion probability and {\it initial} photon degree of linear polarization, in Sect. IV we apply our previous findings to physically motivated systems, in Sect. V we discuss our results, while in Sect. VI we draw our conclusions.

\section{Axion-like particles}

A photon-ALP beam of energy $E$ propagating in the $y$ direction is described by the equation
\begin{equation}
\label{propeq} 
\left(i \, \frac{d}{d y} + E +  {\cal M} (E,y) \right)  \psi(y)= 0~,
\end{equation}
which follows from ${\cal L}_{\rm ALP}$ of Eq. (\ref{lagr}), where $\bf B$ is the external magnetic field and ${\bf E}$ denotes a propagating photon. In addition, $\psi$ reads
\begin{equation}
\label{psi} 
\psi(y)=\left(\begin{array}{c}A_x (y) \\ A_z (y) \\ a (y) \end{array}\right)~.
\end{equation}
In Eq.~(\ref{propeq}) ${\cal M} (E,y)$ represents the photon-ALP mixing matrix. In Eq.~(\ref{psi}) $A_x (y)$ and $A_z (y)$ are the two photon linear polarization amplitudes along the $x$ and $z$ axis, respectively, while $a (y)$ is the ALP amplitude. The short-wavelength approximation is successfully employed in Eq.~(\ref{propeq}), since the system is evaluated in the case $E \gg m_a$~\cite{raffeltstodolsky}. Thus, the photon-ALP beam propagation equation can be treated as a Schr\"odinger-like equation, where the time is substituted by the coordinate $y$. As a consequence, the relativistic beam can be studied as a three-level nonrelativistic quantum system.  Since the mass matrix of the $\gamma - a$ system is off-diagonal, the propagation eigenstates differ from the interaction eigenstates and $\gamma \leftrightarrow a$ oscillations are produced.

We call ${\bf B}_T$ the component of the magnetic field $\bf B$ transverse with respect to the photon momentum $\bf k$~\cite{dgr2011}. From the expression of $\cal L_{\rm ALP}$ of Eq. (\ref{lagr}), it follows that ${\bf B}_T$  is the only component of $\bf B$ that couples with $a$. In addition, by denoting by $\phi$ the angle that ${\bf B}_T$ forms with the $z$ axis, we can express $\cal M$ entering Eq.~(\ref{propeq}) as
\begin{eqnarray}
\label{mixmat}
&\displaystyle{\cal M} (E,y) \equiv \,\,\,\,\,\,\,\,\,\,\,\,\,\,\,\,\,\,\,\,\,\,\,\,\,\,\,\,\,\,\,\,\,\,\,\,\,\,\,\,\,\,\,\,\,\,\,\,\,\,\,\,\,\,\,\,\,\,\,\,\,\,\,\,\,\,\,\,\,\,\,\,\,\,\,\,\,\,\,\,\,\,\,\,\,\,\,\,\,\,\,\,\,\,\,\,\,\,\,\,\,\,\,\,\,\, \nonumber \\
&\displaystyle \left(
\begin{array}{ccc}
\Delta_{xx} (E,y) & \Delta_{xz} (E,y) & \Delta_{a \gamma}(y) \, {\rm sin} \, \phi \\
\Delta_{zx} (E,y) & \Delta_{zz} (E,y) & \Delta_{a \gamma}(y) \, {\rm cos} \, \phi \\
\Delta_{a \gamma}(y) \, {\rm sin}  \, \phi & \Delta_{ a \gamma}(y) \, {\rm cos} \, \phi & \Delta_{a a} (E) \\
\end{array}
\right)~,
\end{eqnarray}
with
\begin{equation}
\label{deltaxx}
\Delta_{xx} (E,y) \equiv \Delta_{\bot} (E,y) \, {\rm cos}^2 \, \phi + \Delta_{\parallel} (E,y) \, {\rm sin}^2 \, \phi~,
\end{equation}
\begin{eqnarray}
&\displaystyle \Delta_{xz} (E,y) = \Delta_{zx} (E,y) \equiv  \nonumber \\
&\displaystyle \left(\Delta_{\parallel} (E,y) - \Delta_{\bot} (E,y) \right) {\rm sin} \, \phi \, {\rm cos} \, \phi~,
\label{deltaxz}
\end{eqnarray}
\begin{equation}
\label{deltazz}
\Delta_{zz} (E,y) \equiv \Delta_{\bot} (E,y) \, {\rm sin}^2 \, \phi + \Delta_{\parallel} (E,y) \, {\rm cos}^2 \, \phi~,
\end{equation}
\begin{equation}
\label{deltamix} 
\Delta_{a \gamma}(y) = \frac{1}{2}g_{a\gamma\gamma}B_T(y)~,
\end{equation}
\begin{equation}
\label{deltaM} 
\Delta_{aa} (E) = - \frac{m_a^2}{2 E}~,
\end{equation}
and
\begin{eqnarray}
\label{deltaort} 
&\displaystyle \Delta_{\bot} (E,y) = \frac{i}{2 \, \lambda_{\gamma} (E,y)} - \frac{\omega^2_{\rm pl}(y)}{2 E}  \nonumber \\
&\displaystyle + \frac{2 \alpha}{45 \pi} \left(\frac{B_T(y)}{B_{{\rm cr}}} \right)^2 E + \rho_{\rm CMB}E~,
\end{eqnarray}
\begin{eqnarray}
\label{deltapar} 
&\displaystyle \Delta_{\parallel} (E,y) = \frac{i}{2 \, \lambda_{\gamma} (E,y)} - \frac{\omega^2_{\rm pl}(y)}{2 E} \nonumber \\
&\displaystyle + \frac{7 \alpha}{90 \pi} \left(\frac{B_T(y)}{B_{{\rm cr}}} \right)^2 E + \rho_{\rm CMB}E ~,    
\end{eqnarray}
where $B_{{\rm cr}} \simeq 4.41 \times 10^{13} \, {\rm G}$ is the critical magnetic field and $\rho_{\rm CMB} \simeq 0.522 \times 10^{-42}$. Eq.~(\ref{deltamix}) accounts for the photon-ALP mixing, while Eq.~(\ref{deltaM}) for the ALP mass effect. The first term in Eqs.~(\ref{deltaort}) and~(\ref{deltapar}) describes the photon absorption with mean free path $\lambda_{\gamma}$. The second term in Eqs.~(\ref{deltaort}) and~(\ref{deltapar}) accounts for the effective photon mass when propagating in a plasma with frequency $\omega_{\rm pl}=(4 \pi \alpha n_e / m_e)^{1/2}$, where $\alpha$ is the fine-structure constant, $n_e$ is the electron number density and $m_e$ is the electron mass. The third term in Eqs.~(\ref{deltaort}) and~(\ref{deltapar}) describes the photon one-loop vacuum polarization coming from the Heisenberg-Euler-Weisskopf (HEW) effective Lagrangian ${\cal L}_{\rm HEW}$~\cite{hew1, hew2, hew3}, which reads
\begin{equation}
\label{HEW}
{\cal L}_{\rm HEW} = \frac{2 \alpha^2}{45 m_e^4} \, \left[ \left({\bf E}^2 - {\bf B}^2 \right)^2 + 7 \left({\bf E} \cdot {\bf B} \right)^2 \right]~.
\end{equation}
Finally, the fourth term in Eqs.~(\ref{deltaort}) and~(\ref{deltapar}) takes into account the contribution from photon dispersion on the cosmic microwave background (CMB)~\cite{raffelt2015}.

A generic solution of Eq.~(\ref{propeq}) can be written as
\begin{equation}
\label{psi2} 
\psi(y)={\cal U}(E;y,y_0)\psi(y_0)~,
\end{equation}
with $y_0$ the initial position of the beam and where ${\cal U}$ is the {\it transfer matrix} of the photon-ALP beam propagation equation -- i.e. the solution of Eq.~(\ref{propeq}) with initial condition ${\cal U}(E;y_0,y_0)=1$. For a non-polarized beam the state vector of Eq.~(\ref{psi}) is substituted by the density matrix $\rho(y) \equiv |\psi(y)\rangle \langle \psi(y)|$ satisfying the Von Neumann-like equation associated to Eq.~(\ref{propeq}), which reads
\begin{equation}
\label{vneum}
i \frac{d \rho (y)}{d y} = \rho (y) \, {\cal M}^{\dag} ( E, y) - {\cal M} ( E, y) \, \rho (y)~,
\end{equation}
whose solution is
\begin{equation}
\label{unptrmatr}
\rho ( y ) = {\cal U} \bigl(E; y, y_0 \bigr) \, \rho_0 \, {\cal U}^{\dag} \bigl(E; y, y_0 \bigr)~.
\end{equation}
Then, the probability that a photon-ALP beam initially in the state $\rho_0$ at position $y_0$ is found in the final state $\rho$ at position $y$ reads
\begin{equation}
\label{unpprob}
P_{\rho_0 \to \rho} (E,y) = {\rm Tr} \Bigl[\rho \, {\cal U} (E; y, y_0) \, \rho_0 \, {\cal U}^{\dag} (E; y, y_0) \Bigr]~,
\end{equation}
with ${\rm Tr} \, \rho_0 = {\rm Tr} \, \rho =1$~\cite{dgr2011}.

We consider now the simplified case of no absorption (which holds true in the applications considered below), a homogeneous medium, constant $\bf B$ field and fully polarized photons. As a consequence, we can choose the $z$ axis along the direction of ${\bf B}_T$ so that $\phi=0$. With these assumptions the photon-ALP conversion probability can be written as
\begin{equation}
\label{convprob}
P_{\gamma \to a} (E, y) = \left(\frac{g_{a\gamma\gamma}B_T \, l_{\rm osc} (E)}{2\pi} \right)^2 {\rm sin}^2 \left(\frac{\pi (y-y_0)}{l_{\rm osc} (E)} \right)~,
\end{equation}
where
\begin{equation}
\label{losc}     
l_{\rm osc} (E) \equiv \frac{2 \pi}{\left[\bigl(\Delta_{zz} (E) - \Delta_{aa} (E) \bigr)^2 + 4 \, \Delta_{a\gamma}^2 \right]^{1/2}}~
\end{equation}
is the photon-ALP beam oscillation length. We can define the {\it low-energy threshold}
\begin{equation}
\label{EL}
E_L \equiv \frac{|m_a^2 - \omega^2_{\rm pl}|}{2 g_{a \gamma \gamma} \, B_T}~,  
\end{equation}
and the {\it high-energy threshold}
\begin{equation}
\label{EH}
E_H \equiv g_{a \gamma \gamma} \, B_T \left[\frac{7 \alpha}{90 \pi} \left(\frac{B_T}{B_{\rm cr}} \right)^2 + \rho_{\rm CMB} \right]^{- 1}~.
\end{equation} 

The applications we will study below are in the case $E \lesssim E_L$, where $P_{\gamma \to a}$ becomes energy dependent, since plasma contribution and/or the ALP mass term are not negligible with respect to the mixing term of Eq.~(\ref{deltamix}), as the figures below show~\cite{noteEL}. The values assumed by $P_{\gamma \to a}$ stand between zero and a maximal value, which depends on the initial photon degree of linear polarization $\Pi_L$ (see Theorem~\ref{theorem1}). For $E_L \lesssim E \lesssim E_H$ the system is in the {\it strong-mixing} regime, where $P_{\gamma \to a}$ is energy independent so that our strategy cannot be performed. For $E \gtrsim E_H$ our method can in principle be implemented and $P_{\gamma \to a}$ becomes energy dependent again, since QED and/or the photon dispersion effects are important. However, the photon-ALP system turns out to be in the latter situation at energies so high that photon absorption is very strong and its simple correction through a perturbative approach is impossible. For system parameters inside physically reasonable bounds (see the applications below), $E_H \sim (1 - 5) \, \rm TeV$ in the extragalactic space -- which represents an energy range where photon absorption due to the extragalactic background light (EBL)~\cite{franceschinirodighiero,dgr2013,gprt} is strong~\cite{noteEH}.

When $\bf B$ is not homogeneous and photons are not fully polarized, what we have just stated still stands. However, all the related equations are much more involved and shed no light on the situation. Yet, in all our applications we have calculated the exact propagation of the photon-ALP beam with the correct spatial dependence of the magnetic fields and electron number densities in all the different crossed regions.

\section{Polarization effects and theoretical results}

The polarization density matrix $\rho (y) \equiv |\psi(y)\rangle \langle \psi(y)|$ associated to the photon-ALP beam allows to describe: a beam of only unpolarized photons by means of $\rho_{\rm unpol}$, which reads
\begin{equation}
\label{densunpol}
{\rho}_{\rm unpol} = \frac{1}{2} \left(
\begin{array}{ccc}
1 & 0 & 0 \\
0 & 1 & 0 \\
0 & 0 & 0 \\
\end{array}
\right)~,
\end{equation}
and totally polarized photons in the $x$ and $z$ directions with $\rho_x$ and $\rho_z$ expressed by
\begin{equation}
\label{densphot}
{\rho}_x = \left(
\begin{array}{ccc}
1 & 0 & 0 \\
0 & 0 & 0 \\
0 & 0 & 0 \\
\end{array}
\right)~, \,\,\,\,\,\,\,\,
{\rho}_z = \left(
\begin{array}{ccc}
0 & 0 & 0 \\
0 & 1 & 0 \\
0 & 0 & 0 \\
\end{array}
\right)~,
\end{equation}
respectively, and a beam constituted by ALPs only, which is represented by
\begin{equation}
\label{densa}
{\rho}_a = \left(
\begin{array}{ccc}
0 & 0 & 0 \\
0 & 0 & 0 \\
0 & 0 & 1 \\
\end{array}
\right)~.
\end{equation}
The case of a beam made of only partially polarized photons is intermediate between $\rho_{\rm unpol}$ and $\rho_x$ or $\rho_z$.

The photon degree of {\it linear polarization} $\Pi_L$ can be defined as
\begin{equation}
\label{PiL}
\Pi_L = \frac{\left[ (\rho_{11}-\rho_{22})^2+(\rho_{12}+\rho_{21})^2\right]^{1/2}}{\rho_{11}+\rho_{22}}~,
\end{equation}
where $\rho_{ij}$ with $i,j=1,2$ are the elements of the $2 \times 2$ photon polarizaton density 1-2 submatrix of the density matrix of the photon-ALP system
$\rho$
~\cite{poltheor1,poltheor2}.

\bigskip

\centerline{ \ \ \  * \ \ \  * \ \ \  * \ \ \ }

\medskip 

We are now in the position to state and demonstrate some theorems linking $P_{\gamma \to a}$ and the {\it initial} $\Pi_L$. Note that the following results are for generic massless spin-one and spin-zero particles, whose prototypes are photons and ALPs, respectively.

\begin{theorem}[Maximal value of $P_{\gamma \to a}$]
\label{theorem1}
In any isolated system consisting of massless spin-one particles $\gamma$ oscillating into light, neutral, spin-zero particles $a$ with initial condition of only spin-one particles with initial degree of linear polarization $\Pi_L$, the conversion probability $P_{\gamma \to a}$ possesses supremum equal to $(1+\Pi_L)/2$.
\end{theorem}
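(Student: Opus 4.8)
The plan is to reduce the statement to an elementary eigenvalue bound for the density matrix, supplemented by a saturation argument. First I would note that ``isolated'' means there is no absorption, so the imaginary first terms of $\Delta_{\bot}$ and $\Delta_{\parallel}$ in Eqs.~(\ref{deltaort})--(\ref{deltapar}) are absent and ${\cal M}(E,y)$ of Eq.~(\ref{mixmat}) is a real symmetric, hence Hermitian, matrix (the $E$ in Eq.~(\ref{propeq}) is a multiple of the identity and contributes only a global phase). Therefore the transfer matrix ${\cal U}(E;y,y_0)$ is unitary. The initial condition of ``only spin-one particles'' means $\rho_0$ is a positive-semidefinite, unit-trace $3\times 3$ matrix whose third row and column vanish, so its only nonzero block is the $2\times 2$ photon polarization submatrix $\rho_{\gamma}$ (whose entries $\rho_{ij}$, $i,j=1,2$, enter Eq.~(\ref{PiL})). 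Assuming, as holds for the astrophysical sources of interest, that the initial beam carries no circular polarization, the eigenvalues of $\rho_{\gamma}$ -- and hence the nonzero eigenvalues of $\rho_0$ -- are $(1\pm\Pi_L)/2$, the remaining eigenvalue being $0$ with eigenvector $|a\rangle\equiv(0,0,1)^{\rm T}$.

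With this in hand the upper bound is one line. By Eq.~(\ref{unpprob}) with $\rho=\rho_a=|a\rangle\langle a|$ of Eq.~(\ref{densa}),
\begin{equation}
P_{\gamma\to a}(E,y)={\rm Tr}\bigl[\rho_a\,{\cal U}\,\rho_0\,{\cal U}^{\dag}\bigr]=\langle v|\rho_0|v\rangle~,\qquad |v\rangle\equiv{\cal U}^{\dag}(E;y,y_0)\,|a\rangle~,
\end{equation}
and $\langle v|v\rangle=1$ by unitarity, so the Rayleigh quotient $\langle v|\rho_0|v\rangle$ is at most the largest eigenvalue of $\rho_0$, which (since $0\le\Pi_L\le1$) equals $(1+\Pi_L)/2$. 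Hence $P_{\gamma\to a}(E,y)\le(1+\Pi_L)/2$ for every energy and position in any isolated system.

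It then remains to prove that this bound is sharp, which is the real content -- and the main obstacle, the upper bound being trivial. Equality in the Rayleigh estimate requires $|v\rangle$ to lie in the top eigenspace of $\rho_0$, i.e.\ ${\cal U}$ must convert the ``bright'' linear-polarization eigenmode $|u_+\rangle$ \emph{completely} into ALPs, ${\cal U}|u_+\rangle\propto|a\rangle$. To exhibit (or approach arbitrarily closely) such a ${\cal U}$, I would work in the frame whose $z$ axis is along the bright polarization direction, so that $|u_+\rangle=(0,1,0)^{\rm T}$, and take ${\bf B}_T$ along $z$ ($\phi=0$): then ${\cal M}$ splits into the inert $A_x$ channel and the resonant $A_z$--$a$ two-level block, which by Eq.~(\ref{convprob}) realizes complete conversion when the mixing angle is $\pi/4$ -- i.e.\ in the strong-mixing regime $E_L\lesssim E\lesssim E_H$ of Eqs.~(\ref{EL})--(\ref{EH}) -- and $y-y_0$ is an odd multiple of $l_{\rm osc}(E)/2$. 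There ${\cal U}|u_+\rangle\propto|a\rangle$, so $P_{\gamma\to a}=(1+\Pi_L)/2$ is attained and $(1+\Pi_L)/2$ is the supremum. The delicate point to get right is exactly this last step: that the geometric relations~(\ref{deltaxx})--(\ref{deltapar}) among the $\Delta$'s impose no hidden obstruction to full conversion of the bright mode, and that the regime and alignment just described indeed achieve the mixing angle $\pi/4$.
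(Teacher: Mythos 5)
Your proposal is correct and its core is the same as the paper's argument: diagonalize the initial photon density matrix so that its nonzero eigenvalues are $(1\pm\Pi_L)/2$, then use unitarity of the transfer matrix to bound $P_{\gamma\to a}$. The paper does this by writing out the matrix elements $u_{3j}$ explicitly and invoking $|u_{31}|^2+|u_{32}|^2+|u_{33}|^2=1$; your Rayleigh-quotient phrasing $P_{\gamma\to a}=\langle v|\rho_0|v\rangle\le\lambda_{\max}(\rho_0)$ with $|v\rangle={\cal U}^{\dag}|a\rangle$ is a cleaner packaging of the identical computation. You go beyond the paper in two respects that are worth noting. First, you make explicit the hypothesis of no circular polarization; this is genuinely needed, since for a state with nonzero Stokes $V$ the largest eigenvalue of the polarization submatrix is $(1+\sqrt{\Pi_L^2+V^2})/2>(1+\Pi_L)/2$ and the stated bound would fail --- the paper uses this assumption tacitly when it identifies $|p_1-p_2|$ in the diagonalizing basis with $\Pi_L$ (a unitary that diagonalizes a $\rho_\gamma$ with complex off-diagonal entries is not a rotation and does not preserve the combination entering Eq.~(\ref{PiL})). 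Second, you actually address sharpness: the paper stops at ``$P_{\gamma\to a}$ is maximized if $|u_{33}|=|u_{32}|=0$ and $|u_{31}|=1$'' without checking that such a ${\cal U}$ is realized by the dynamics, whereas the theorem claims a \emph{supremum}; your strong-mixing, $\phi=0$, half-oscillation-length construction closes that gap (with the understanding that the supremum is taken over the class of admissible systems, since a fixed system outside the strong-mixing regime need not attain it). Both additions strengthen rather than alter the paper's route.
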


\begin{proof}
The system of spin-one particles $\gamma$ oscillating into spin-zero particles $a$, which is under consideration, is described by the Von-Neumann like equation~(\ref{vneum}), where $\rho$ is the polarization density matrix and $\cal M$ is the mixing matrix of the system. From quantum mechanics, we can express the conversion probability $P_{\gamma \to a}$ as
\begin{equation}
\label{proof1}
P_{\gamma \to a} = {\rm Tr} \Bigl[\rho_a \, {\cal U}  \, \rho_{\rm in} \, {\cal U}^{\dag} \Bigr]~, \nonumber
\end{equation}
where $\rho_a$ reads from Eq.~(\ref{densa}), ${\cal U}$ is the transfer matrix associated to the Von-Neumann like equation and $\rho_{\rm in}$ is the initial density matrix of the system made of spin-one particles only. For the states of the system, we choose a particular basis $|\Psi \rangle$ where $\rho_{\rm in}=|\Psi \rangle \langle\Psi |$ turns out to be diagonal, so that $\rho_{\rm in}$ can be written as
\begin{equation}
\label{proof2}
{\rho}_{\rm in} = \left(
\begin{array}{ccc}
p_1 & 0 & 0 \\
0 & p_2 & 0 \\
0 & 0 & 0 \\
\end{array}
\right)~, \nonumber
\end{equation}
where $p_1$ and $p_2$ are two real positive numbers. Since the system is isolated, the trace of $\rho_{\rm in}$ is ${\rm Tr}\left(\rho_{\rm in}\right)=p_1+p_2=1$ by definition.

By using the expression of the degree of linear polarization $\Pi_L$ given by Eq.~(\ref{PiL}) combined with that of ${\rho}_{\rm in}$ above, we obtain
\begin{equation}
\label{proof3}
\Pi_L=\frac{|p_1-p_2|}{p_1+p_2}~.  \nonumber
\end{equation}
We consider the case $p_1-p_2 \geq 0$, the case $p_1-p_2 \leq 0$ is totally similar. By employing now condition ${\rm Tr}\left(\rho_{\rm in}\right)=p_1+p_2=1$ we have the system
\begin{equation}
\label{proof4}
\begin{cases}
p_1-p_2=\Pi_L ~,\\[8pt]
p_1+p_2=1~,
\end{cases} \nonumber
\end{equation}
which allows to express $p_1$ and $p_2$ as a function of $\Pi_L$, so that $p_1=(1+\Pi_L)/2$ and $p_2=(1-\Pi_L)/2$. Thus, $\rho_{\rm in}$ consequently reads
\begin{equation}
\label{proof5}
{\rho}_{\rm in} = \frac{1}{2}\left(
\begin{array}{ccc}
1+\Pi_L& 0 & 0 \\
0 & 1-\Pi_L & 0 \\
0 & 0 & 0 \\
\end{array}
\right)~. \nonumber
\end{equation}
By expressing $\cal U$ as
\begin{equation}
\label{proof6}
{\cal U} \equiv \left(
\begin{array}{ccc}
u_{11} & u_{12} & u_{13} \\
u_{21} & u_{22} & u_{23} \\
u_{31} & u_{32} & u_{33} \\
\end{array}
\right)~, \nonumber
\end{equation}
where $u_{ij}$ with $i,j=1,2,3$ are complex numbers, we can calculate $P_{\gamma \to a}$ by using the expression above as
\begin{equation}
\label{proof7}
P_{\gamma \to a}=\frac{1}{2}\left(|u_{31}|^2+|u_{32}|^2\right)+\frac{\Pi_L}{2}\left(|u_{31}|^2-|u_{32}|^2\right)~.  \nonumber
\end{equation}
Since the system is isolated $\cal U$ is unitary, which implies the condition ${\cal U}{\cal U}^{\dag}=1$ and in particular
\begin{equation}
\label{proof8}
|u_{31}|^2+|u_{32}|^2+|u_{33}|^2=1~,  \nonumber
\end{equation}
which allows to express $P_{\gamma \to a}$ as
\begin{equation}
\label{proof9}
P_{\gamma \to a}=\frac{1}{2}\left(1-|u_{33}|^2\right)+\frac{\Pi_L}{2}\left(|u_{31}|^2-|u_{32}|^2\right)~.  \nonumber
\end{equation}
In addition, the fact that $\cal U$ is unitary implies $0 \leq |u_{ij}| \leq 1$ with $i,j=1,2,3$, so that $P_{\gamma \to a}$ is maximized if $|u_{33}|=|u_{32}|=0$ and $|u_{31}|=1$. Thus, we can write
\begin{equation}
\label{proof10}
P_{\gamma \to a}\leq\frac{1}{2}\left(1+\Pi_L\right)~,  \nonumber
\end{equation}
which establishes Theorem~\ref{theorem1}. 
\end{proof}

\begin{corollary}
\label{corollary2}
In the same $\gamma-a$ system of Theorem~\ref{theorem1} but with initial condition of only unpolarized spin-one particles, we have $P_{\gamma \to a} \leq 1/2$.
\end{corollary}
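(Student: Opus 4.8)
The plan is to obtain the corollary as an immediate specialization of Theorem~\ref{theorem1}. An unpolarized beam of spin-one particles is by definition the one whose degree of linear polarization vanishes: inserting $\rho_{\rm in}=\rho_{\rm unpol}$ of Eq.~(\ref{densunpol}) into the definition~(\ref{PiL}) gives $\rho_{11}=\rho_{22}=1/2$ and $\rho_{12}=\rho_{21}=0$, hence $\Pi_L=0$. Theorem~\ref{theorem1} then yields $P_{\gamma\to a}\leq(1+\Pi_L)/2=1/2$, which is the assertion.

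If instead one prefers a self-contained derivation rather than a reference to the theorem, I would simply repeat the density-matrix computation of the preceding proof with $\rho_{\rm in}=\rho_{\rm unpol}$. Writing ${\cal U}$ as in Eq.~(\ref{proof6}), one gets $P_{\gamma\to a}={\rm Tr}\bigl[\rho_a\,{\cal U}\,\rho_{\rm unpol}\,{\cal U}^{\dag}\bigr]=\tfrac{1}{2}\bigl(|u_{31}|^2+|u_{32}|^2\bigr)$; using the unitarity of ${\cal U}$ (which holds because the system is isolated), $|u_{31}|^2+|u_{32}|^2=1-|u_{33}|^2\leq1$, so $P_{\gamma\to a}\leq1/2$.

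There is essentially no obstacle in this case: the only point worth spelling out is that the unpolarized state in the enlarged three-level space (the two photon polarization amplitudes together with the ALP amplitude) is the one carrying equal weight on the two photon components and zero ALP amplitude, so that the $2\times2$ photon submatrix entering $\Pi_L$ is one-half the identity. Once this is granted, $\Pi_L=0$ and the bound follows with no further work. For completeness I would also remark that the bound $1/2$ is saturated whenever the propagation admits a transfer matrix with $|u_{33}|=0$ and $|u_{31}|^2+|u_{32}|^2=1$, i.e. whenever one of the two initial polarization components can be fully converted into the ALP.
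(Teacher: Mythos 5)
Your proposal is correct and matches the paper's own argument, which likewise just sets $\Pi_L=0$ for an unpolarized beam and invokes Theorem~\ref{theorem1}; your extra verification that $\rho_{\rm unpol}$ gives $\Pi_L=0$ via Eq.~(\ref{PiL}), and the self-contained unitarity computation, are sound but not needed beyond what the paper does.
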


\begin{proof}
Unpolarized spin-one particles have $\Pi_L=0$. Thus, Corollary~\ref{corollary2} directly follows from Theorem~\ref{theorem1} by taking $\Pi_L=0$.
\end{proof}

Note that since the $\gamma-a$ system is isolated the condition $P_{\gamma \to \gamma}+P_{\gamma \to a}=1$ holds true, where $P_{\gamma \to \gamma}$ is the survival probability. Therefore, from Theorem~\ref{theorem1} we obtain that $P_{\gamma \to \gamma} \ge (1-\Pi_L)/2$ for a generic $\Pi_L$ and $P_{\gamma \to \gamma} \ge 1/2$ for an initially unpolarized beam.

\begin{theorem}[$\Pi_L$ as measure]
\label{theorem2}
In the hypotheses of Theorem~\ref{theorem1}, $\Pi_L$ represents the measure of the intersection of the image of $P_{\gamma \to a}$ and of the image of $P_{\gamma \to \gamma}$ so that $\Pi_L=\mu \left[ {\rm Im}(P_{\gamma \to a}) \cap{\rm Im}(P_{\gamma \to \gamma})\right]$.
\end{theorem}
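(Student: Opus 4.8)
The plan is to compute the two images explicitly and then intersect them, so that the whole argument reduces to elementary interval arithmetic once Theorem~\ref{theorem1} is in hand. First I would use the fact that the $\gamma-a$ system is isolated, whence $P_{\gamma \to \gamma}=1-P_{\gamma \to a}$, as already noted after Corollary~\ref{corollary2}; this means ${\rm Im}(P_{\gamma \to \gamma})$ is just the reflection of ${\rm Im}(P_{\gamma \to a})$ about the point $1/2$, and it suffices to pin down ${\rm Im}(P_{\gamma \to a})$.

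For the latter, Theorem~\ref{theorem1} already gives $\sup P_{\gamma \to a}=(1+\Pi_L)/2$. The infimum is $0$: evaluating at the initial point, where ${\cal U}(E;y_0,y_0)=1$, gives $P_{\gamma \to a}=({\rho}_{\rm in})_{33}=0$, and $P_{\gamma \to a}\ge 0$ always. The one genuine point to establish is that the image is the \emph{whole} interval between these two bounds, i.e. that $P_{\gamma \to a}$, regarded as a function of the energy $E$ (equivalently of the propagation coordinate $y$), actually takes every intermediate value. This follows from continuity of the transfer matrix ${\cal U}$ in its parameters together with connectedness of the parameter range, via the intermediate value theorem; and since the bound of Theorem~\ref{theorem1} is a genuine supremum it is approached arbitrarily closely, so ${\rm Im}(P_{\gamma \to a})=[0,(1+\Pi_L)/2]$ up to possibly the right endpoint, which is immaterial for the Lebesgue measure $\mu$.

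Then ${\rm Im}(P_{\gamma \to \gamma})=1-{\rm Im}(P_{\gamma \to a})=[(1-\Pi_L)/2,1]$ (again up to an endpoint), so the intersection of the two images is the interval $[(1-\Pi_L)/2,(1+\Pi_L)/2]$ and
\[
\mu\!\left[{\rm Im}(P_{\gamma \to a})\cap{\rm Im}(P_{\gamma \to \gamma})\right]=\frac{1+\Pi_L}{2}-\frac{1-\Pi_L}{2}=\Pi_L~,
\]
which is Theorem~\ref{theorem2}. The main obstacle is precisely the connectedness/attainment step: one must be confident that $P_{\gamma \to a}$ genuinely sweeps through all values in $[0,(1+\Pi_L)/2]$ rather than landing on some disconnected subset, which rests on continuity of ${\cal U}$ and on the sharpness of Theorem~\ref{theorem1} (that configurations with $|u_{31}|\to1$ and $|u_{32}|,|u_{33}|\to0$ are approached within the admissible family of transfer matrices). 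Everything after that is arithmetic.
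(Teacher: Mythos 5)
Your proposal is correct and follows essentially the same route as the paper: identify ${\rm Im}(P_{\gamma \to a})=[0,(1+\Pi_L)/2]$ and ${\rm Im}(P_{\gamma \to \gamma})=[(1-\Pi_L)/2,1]$, intersect, and read off the length $\Pi_L$. You are in fact more careful than the paper on the one nontrivial point --- the paper simply asserts from Theorem~\ref{theorem1} that the image is the \emph{entire} interval, whereas you correctly isolate the attainment/connectedness step (continuity of $\mathcal{U}$, the intermediate value theorem, and sharpness of the supremum) as the hypothesis that actually needs to hold, and note that endpoint attainment is irrelevant for the Lebesgue measure.
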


\begin{proof}
By recalling that the image of a function $f$, denoted by ${\rm Im}(f)$, is defined as the set of all values assumed by $f$, we have ${\rm Im}(P_{\gamma \to a})=[0, (1+\Pi_L)/2]$ from Theorem~\ref{theorem1} and ${\rm Im}(P_{\gamma \to \gamma})=[(1-\Pi_L)/2,1]$ from the subsequent note. Consequently, we obtain ${\rm Im}(P_{\gamma \to a}) \cap{\rm Im}(P_{\gamma \to \gamma})=[(1-\Pi_L)/2,(1+\Pi_L)/2]$. Thus, the measure of this interval reads $\mu \left[ {\rm Im}(P_{\gamma \to a}) \cap{\rm Im}(P_{\gamma \to \gamma})\right]=\Pi_L$ which establishes Theorem~\ref{theorem2}.
\end{proof}

\begin{corollary}
\label{corollary4}
In the hypotheses of Corollary~\ref{corollary2}, the intersection of the image of $P_{\gamma \to a}$ and of the image of $P_{\gamma \to \gamma}$ is made of one point only and in particular ${\rm Im}(P_{\gamma \to a}) \cap{\rm Im}(P_{\gamma \to \gamma})=\{ 1/2 \}$.
\end{corollary}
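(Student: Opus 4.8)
The plan is to obtain this statement as the $\Pi_L\to 0$ specialization of Theorem~\ref{theorem2}, in exactly the same way that Corollary~\ref{corollary2} is the $\Pi_L=0$ specialization of Theorem~\ref{theorem1}. First I would note that, under the hypotheses of Corollary~\ref{corollary2}, the beam is initially unpolarized, so $\Pi_L=0$. Then I would import from the proof of Theorem~\ref{theorem2} the two image identities ${\rm Im}(P_{\gamma \to a})=[0,(1+\Pi_L)/2]$ and ${\rm Im}(P_{\gamma \to \gamma})=[(1-\Pi_L)/2,1]$, together with the computation of their intersection, ${\rm Im}(P_{\gamma \to a}) \cap {\rm Im}(P_{\gamma \to \gamma})=[(1-\Pi_L)/2,(1+\Pi_L)/2]$.

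The second step is simply to substitute $\Pi_L=0$ into this last interval: the endpoints $(1-\Pi_L)/2$ and $(1+\Pi_L)/2$ both collapse to $1/2$, so the interval degenerates to the singleton $\{1/2\}$, which is the claimed statement. Equivalently, one may argue through the measure: Theorem~\ref{theorem2} gives $\mu\left[{\rm Im}(P_{\gamma \to a}) \cap {\rm Im}(P_{\gamma \to \gamma})\right]=\Pi_L=0$, so the intersection --- being a nonempty closed interval of vanishing Lebesgue measure --- must reduce to a single point, and the explicit endpoints identify that point as $1/2$.

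There is essentially no obstacle here: the result is an immediate corollary, and the only point requiring a little care is that it inherits from Theorem~\ref{theorem2} the convention that the images of $P_{\gamma \to a}$ and $P_{\gamma \to \gamma}$ are the full \emph{closed} intervals (in particular that the supremum $(1+\Pi_L)/2=1/2$ of $P_{\gamma \to a}$ is actually attained, so that $1/2$ genuinely lies in both images). Granting that, the collapse $[1/2,1/2]=\{1/2\}$ is immediate and establishes Corollary~\ref{corollary4}.
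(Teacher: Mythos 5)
Your proposal is correct and follows essentially the same route as the paper: specializing Theorem~\ref{theorem2} to $\Pi_L=0$, with your ``equivalent'' measure-based argument (zero-measure intersection pinned down by the bounds $P_{\gamma \to a}\leq 1/2$ and $P_{\gamma \to \gamma}\geq 1/2$) being precisely the form the paper uses. Your remark that one must grant that the images are the full closed intervals, so that $1/2$ is actually attained, is a fair caveat that the paper inherits tacitly from the proof of Theorem~\ref{theorem2} as well.
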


\begin{proof}
Unpolarized spin-one particles are characterized by $\Pi_L=0$. Thus, Theorem~\ref{theorem2} establishes that $\mu \left[ {\rm Im}(P_{\gamma \to a}) \cap{\rm Im}(P_{\gamma \to \gamma})\right]=0$. Corollary~\ref{corollary2} and the subsequent note assure that $P_{\gamma \to a}\leq 1/2$ and $P_{\gamma \to \gamma}\geq 1/2$, respectively. As a consequence, we obtain ${\rm Im}(P_{\gamma \to a}) \cap{\rm Im}(P_{\gamma \to \gamma})=\{ 1/2 \}$ which establishes Corollary~\ref{corollary4}.
\end{proof}

\section{Application}

In order to test the feasibility of the latter theoretical results we consider three concrete cases to verify if the {\it initial} $\Pi_L$ can indeed be measured when $P_{\gamma \to \gamma}$ and $P_{\gamma \to a}$ are extracted from the observed astrophysical spectra. We exploit a feature of the photon-ALP system: $P_{\gamma \to \gamma}$ shows a pseudo-oscillatory behavior not only with respect to the distance but also versus the energy $E$ (see Eq.~(\ref{convprob}) above) in a few decades around the critical energy $E_L$ defined by Eq.~(\ref{EL}), which takes into account the value of $m_a$ and/or of the effective photon mass (see Sect. II and~\cite{grSM}). In this energy region $P_{\gamma \to \gamma}$ can assume all values from 1 down to that allowed by the initial $\Pi_L$. Thanks to this property, $\Pi_L$ can be extracted from a $P_{\gamma \to \gamma}$ and $P_{\gamma \to a}$ versus $E$ plot. We consider only systems lacking photon absorption to verify Theorem~\ref{theorem2} hypotheses. After the description of the considered physical case, we show how a observer would proceed to extract the initial $\Pi_L$ from real spectral data, which are generated by means of a Montecarlo method applied to the emitted spectrum phenomenological model reported below. Concerning the binning procedure, we assume the typical resolution of the considered energy range in the optical, X-ray, MeV and GeV bands~\cite{optical,swift,eastrogam1,fermiSens,CTAsens}.

\subsection{X-ray energy band}

First, we consider a high-frequency-peaked BL Lac object (HBL) -- a blazar characterized by the absence of emission lines -- placed inside a poor galaxy cluster (where photon-ALP interaction is negligible) at a redshift $z=0.1$ and located in the direction of the Galactic pole~\cite{footnote1}. We study the photon-ALP interaction inside the magnetic field of the jet, that of the host galaxy, inside the extragalactic space and in the Milky Way by following the procedure developed in~\cite{gtre2019,trg2015,grExt}. We consider data in the energy range $3 \, {\rm eV} \le E \le 3 \times 10^4 \, {\rm eV}$ but since data in the UV band are missing we limit to the two bands $3 \, {\rm eV} \le E \le 8 \, {\rm eV}$ and $200 \, {\rm eV} \le E \le 3 \times 10^4 \, {\rm eV}$. In such energy ranges HBL emission is produced by electron-synchrotron, whose luminosity is modeled by the phenomenological expression
\begin{equation}
\label{sync}
L(\nu)=L_0\frac{(\nu/\nu_0)^{-\alpha}}{1+(\nu/\nu_0)^{-\alpha+\beta}} {\rm exp}(-\nu/\nu_{\rm cut})~,
\end{equation}
where $L_0$ accounts for the luminosity normalization, $\nu$ is the frequency, $\nu_0$ represents the synchrotron peak position, ${\alpha}$ and $\beta$ are the two slopes before and after $\nu_0$, respectively, while $\nu_{\rm cut}$ is a cut-off frequency~\cite{blazarSeq}. We consider the following parameter values: $L_0= 10^{29} \, \rm erg$, $\nu_0 = 10^{16} \, \rm Hz$, ${\alpha}=0.68$, $\beta=1.2$ and $\nu_{\rm cut}=4 \times 10^{19} \, \rm Hz$~\cite{blazarSeq}. Since synchrotron emission is partially polarized, we take $\Pi_L=0.3$ for definiteness~\cite{blazarPolarSincro}. In addition, concerning the blazar jet, we take a Lorentz factor $\gamma=15$, a magnetic field $B^{\rm jet}$ with toroidal profile ($\propto 1/y$) and an electron number density profile $n_e^{\rm jet} \propto 1/y^2$  with the emission position placed at $y_{\rm emis}= 3 \times 10^{16} \, \rm cm$, where $B^{\rm jet}(y_{\rm emis})=0.5 \, \rm G$ and $n_e^{\rm jet}(y_{\rm emis})= 5 \times 10^4 \, \rm cm^{-3}$~\cite{gtre2019}. We consider an elliptical host galaxy with magnetic field strength $B^{\rm host}=5 \, \mu{\rm G}$ and coherence length $L_{\rm dom}^{ \rm host} = 150 \, \rm pc$~\cite{moss1996}, while an extragalactic magnetic field strength $B^{\rm ext}=1 \, \rm nG$ with coherence length  $L_{\rm dom}^{ \rm ext}$ in the range $(0.2 - 10) \, \rm Mpc$ and average $\langle L_{\rm dom}^{ \rm ext} \rangle = 2 \, \rm Mpc$~\cite{grSM}. Concerning the Milky Way magnetic field $B_{\rm MW}$, we adopt the Jansson and Farrar model~\cite{jansonfarrar1,jansonfarrar2,BMWturb}. Finally, regarding the photon-ALP interaction we take: $g_{a\gamma\gamma}= 0.5 \times 10^{-11} \, \rm GeV^{-1}$ and $m_a= 5 \times 10^{-14} \, \rm eV$. We assume a $\sim 1 \, \rm h$ observation time~\cite{swift}.

\begin{figure}
\centering
\includegraphics[width=0.5\textwidth]{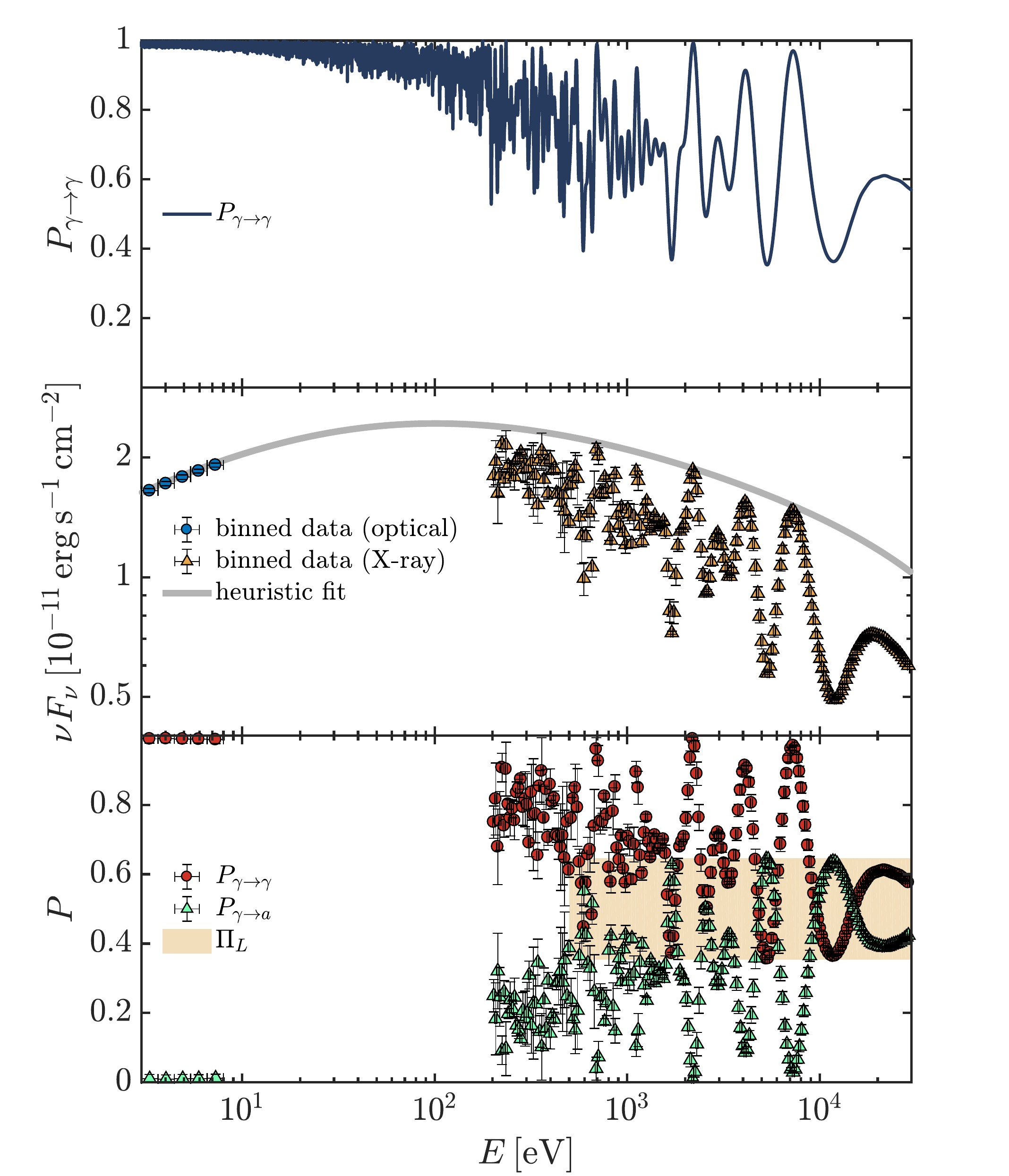}
\caption{\label{syncF} Measure of $\Pi_L$ with data in the energy range $3 \, {\rm eV} \le E \le 3 \times 10^4 \, {\rm eV}$. {\it Top panel}: A typical realization of $P_{\gamma \to \gamma}$ versus $E$. {\it Central panel}: Observed binned spectrum with circular blob representing the optical data and triangles accounting for X-ray data. {\it Bottom panel}: Inferred $P_{\gamma \to \gamma}$ and $P_{\gamma \to a}$ and measure of $\Pi_L$.}
\end{figure}

In the top panel of Fig.~\ref{syncF} we plot $P_{\gamma \to \gamma}$ for a typical realization of the photon-ALP beam propagation process. In the central panel, we report the observed binned spectrum calculated by multiplying the emitted spectrum arising from Eq.~(\ref{sync}) by $P_{\gamma \to \gamma}$. The resulting observed spectrum is subsequently binned with the typical instrument energy resolution in the optical~\cite{optical} and X-ray band~\cite{swift}. 
The following steps correspond to how an observer would analyze real data. In this fashion, we perform a heuristic physically motivated fit of such observed data by means of Eq.~(\ref{sync}). Photon-ALP interactions produce a strong energy dependent dimming of the emitted flux -- since some photons are transformed into ALPs -- so that the emitted spectrum can be reconstructed by fitting the upper bins. 
The resulting curve is plotted in the central panel of Fig.~\ref{syncF}.
The binned $P_{\gamma \to \gamma}$ is obtained by dividing the binned observed spectrum by the inferred emitted one. The result is plotted in the bottom panel of Fig.~\ref{syncF} along with $P_{\gamma \to a}=1-P_{\gamma \to \gamma}$ -- which holds true since no photon absorption is present in the energy range under consideration. We can now apply Theorem~\ref{theorem2} and calculate the intersection set associated to $P_{\gamma \to \gamma}$ and $P_{\gamma \to a}$. Hence, the measure of the resulting set gives the observed value of the {\it initial} photon degree of linear polarization, which reads $\Pi_L = 0.288 \pm 0.016$.

\subsection{MeV energy band}

In the second application, we consider again a HBL but now placed inside a quite rich galaxy cluster at a redshift $z=0.05$ and located in the direction of the Galactic pole~\cite{footnote1}. We calculate the photon-ALP beam propagation inside the jet, in the host galaxy, in the cluster and in the Milky Way. Since the strength of $B^{\rm ext}$ is not well constrained we hypothesize here that $B^{\rm ext} < 10^{-15} \, \rm G$, so that photon-ALP interaction is negligible~\cite{grExt}. In order to evaluate $P_{\gamma \to \gamma}$, we use again the calculation scheme developed in~\cite{gtre2019,trg2015} to which we add the model of the photon-ALP conversion inside galaxy clusters studied in~\cite{meyerKolm}. We consider data in the energy band $2 \times 10^5 \, {\rm eV} \le E \le 2 \times 10^8 \, {\rm eV}$. In such energy range HBL emission is produced by inverse Compton scattering or proton-synchrotron. Phenomenologically, Eq.~(\ref{sync}) still describes emission but with obvious change of the parameter meaning: in particular, $\nu_0$ represents now the inverse Compton or proton-synchrotron peak position~\cite{blazarSeq}. We take the following parameter values: $L_0= 8.5 \times 10^{20} \, \rm erg$, $\nu_0= 10^{24} \, \rm Hz$, ${\alpha}=0.8$, $\beta=1.2$ and $\nu_{\rm cut}=10^{27} \, \rm Hz$~\cite{blazarSeq}. Inverse Compton emission is expected to be low polarized ($\Pi_L \simeq 0$)~\cite{blazarPolarIC}, while in the case of proton-synchrotron a higher polarization is expected~\cite{blazarPolarProtSincro}. We take $\Pi_L=0.1$ for definiteness. Concerning HBL jet, host galaxy and Milky Way we adopt the same models and parameters of the previous case apart from $B^{\rm jet}(y_{\rm emis})=0.1 \, \rm G$. We model the galaxy cluster magnetic field $B^{\rm clu}$ with a Kolmogorov-type turbulence power spectrum with the wave number $k$ taking the minimal and maximal values $k_L= 0.1 \, \rm kpc^{-1}$ and $k_H=3 \, \rm kpc^{-1}$, respectively, and index $q=-11/3$. Therefore, the cluster magnetic field $B^{\rm clu}$ can be expressed as
\begin{equation}
\label{Bclu}
B^{\rm clu}(y)={\cal B} \left( B_0^{\rm clu},k,q,y \right) (n_e^{\rm clu}(y) / n_{e,0}^{\rm clu} )^{\eta_{\rm clu}}~,
\end{equation}
while the electron number density $n_e^{\rm clu}$ reads
\begin{equation}
\label{neclu}
n_e^{\rm clu}(y)=n_{e,0}^{\rm clu}(1+y^2/r^2_{\rm core})^{-3\beta_{\rm clu}/2}~,
\end{equation}
where ${\cal B}$ represents the spectral function accounting for the Kolmogorov-type turbulence~\cite{meyerKolm}, $B_0^{\rm clu}$ and $n_{e,0}^{\rm clu}$ are the central cluster magnetic field strength and electron number density, respectively, while $\eta_{\rm clu}$ and $\beta_{\rm clu}$ are two parameters and $r_{\rm core}$ is the cluster core radius~\cite{cluFeretti,clu2}. We take the following parameter values: $B_0^{\rm clu}=20 \, {\mu}{\rm G}$, $n_{e,0}^{\rm clu}=0.1 \, \rm cm^{-3}$, $\eta_{\rm clu}=0.75$, $\beta_{\rm clu}=2/3$, $r_{\rm core}=150 \, \rm kpc$ and a cluster radius of $1 \, \rm Mpc$~\cite{cluFeretti,clu2}. Concerning ALP parameters we take: $g_{a\gamma\gamma}= 0.5 \times 10^{-11} \, \rm GeV^{-1}$ and $m_a= 2 \times 10^{-10} \, \rm eV$. We assume a $\sim 0.2 \, \rm yr$ observation time~\cite{eastrogam1}.

By exactly proceeding with the same steps as the previous case, in the top panel of Fig.~\ref{icF} we plot $P_{\gamma \to \gamma}$ and in the central panel we report the binned observed spectral data with the typical instrument energy resolution~\cite{eastrogam1}. In the bottom panel we plot $P_{\gamma \to \gamma}$ and $P_{\gamma \to a}$ extracted from the observed spectrum. By following the same strategy of the previous case we infer $\Pi_L= 0.090 \pm 0.018$.

\begin{figure}
\centering
\includegraphics[width=0.5\textwidth]{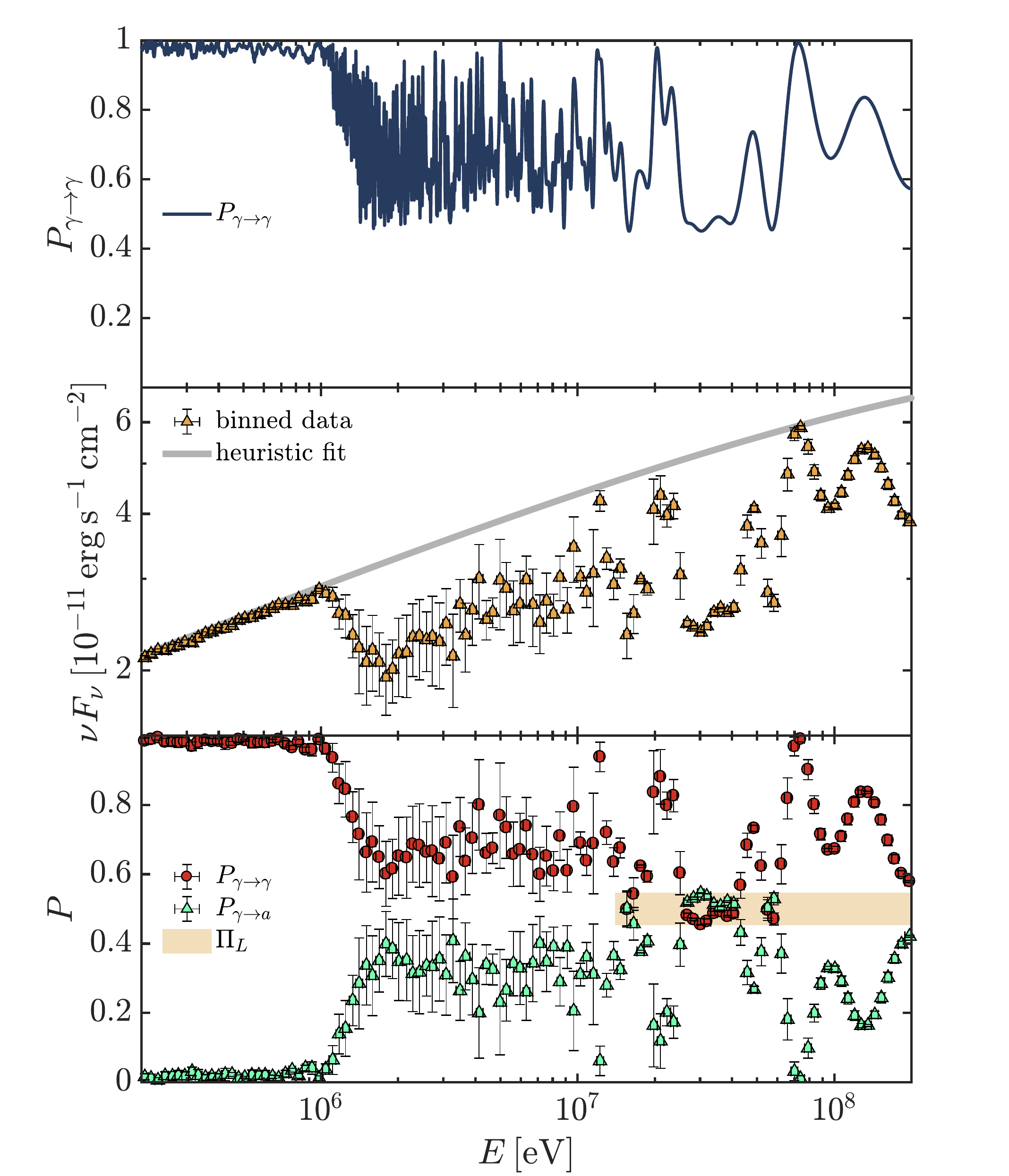}
\caption{\label{icF} Same as Fig.~\ref{syncF} but with data in the energy range $2 \times 10^5 \, {\rm eV} \le E \le 2 \times 10^8 \, {\rm eV}$.}
\end{figure}

\subsection{GeV energy band}

We exactly consider here the same astrophysical system of the previous example (MeV energy band) but now with data in the energy range $10^{8} \, {\rm eV} \le E \le 10^{11} \, {\rm eV}$ and with the same values of the parameters apart from the ALP mass, which we take $m_a=5 \times 10^{-9} \, \rm eV$. We consider an initial $\Pi_L=0.1$. We assume a $\sim 50 \, \rm h$ observation time~\cite{fermiSens}. We proceed with the same steps of the previous two cases. We plot $P_{\gamma \to \gamma}$ in the top panel of Fig.~\ref{icFgevSm}, while we report the binned observed spectral data with the typical instrument energy resolution~\cite{fermiSens} in the central panel. We plot $P_{\gamma \to \gamma}$ and $P_{\gamma \to a}$ extracted from the observed spectrum in the bottom panel. By following the same strategy of the two examples above, we obtain $\Pi_L=0.077 \pm 0.019$.

\begin{figure}
\centering
\includegraphics[width=0.5\textwidth]{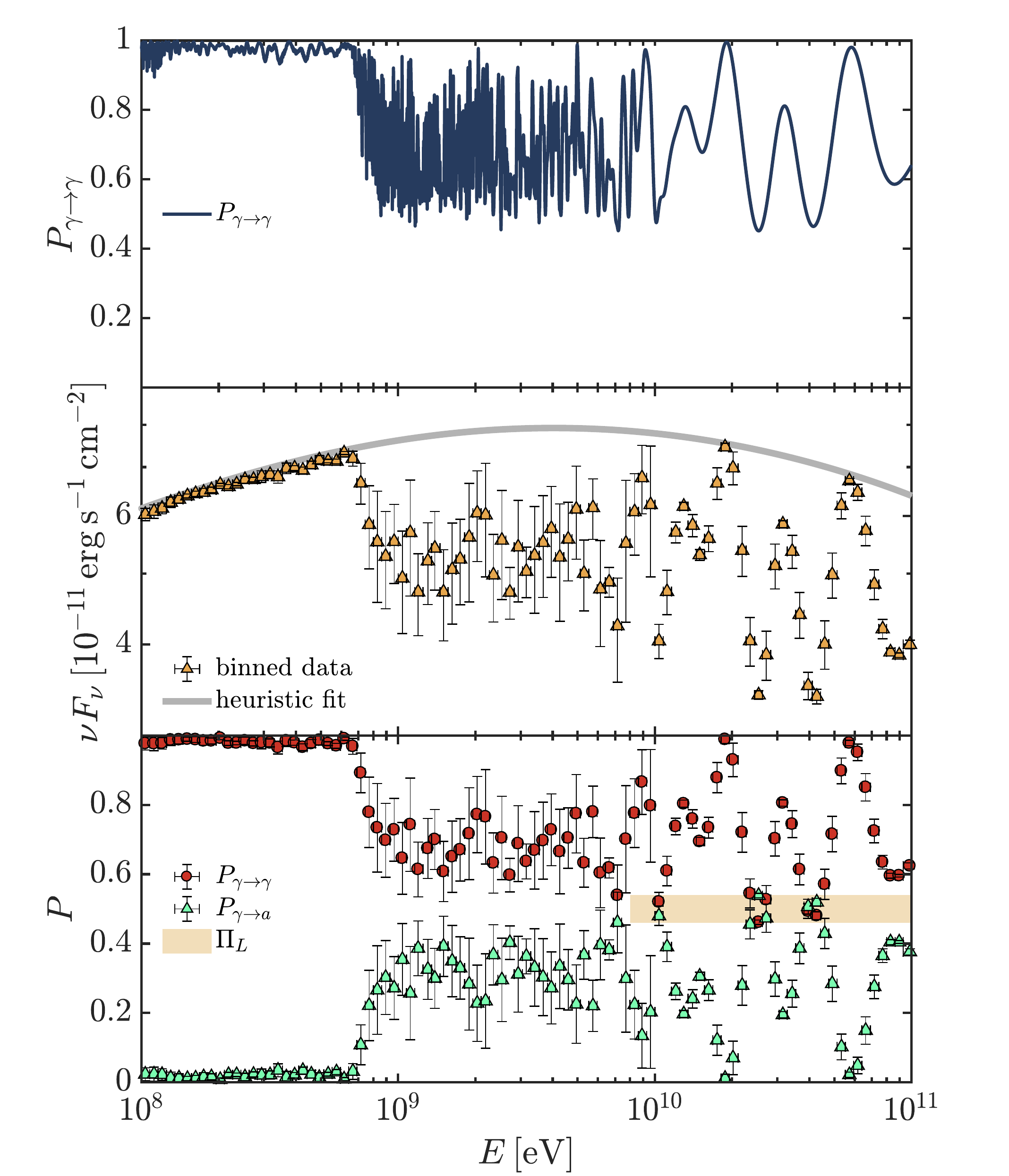}
\caption{\label{icFgevSm} Same as Fig.~\ref{syncF} but with data in the energy range $10^8 \, {\rm eV} \le E \le 10^{11} \, {\rm eV}$.}
\end{figure}

\section{Discussion}

The previous three applications show the actual possibility to use the result of Theorem~\ref{theorem2}. We want to stress that our proposal consists in simply using current and more likely future spectral data coming from any observatory from the X-ray up to the VHE band concerning possible signals of ALP-induced spectral irregularities to study the {\it initial} photon degree of linear polarization. This fact is astonishing for two reasons: (i) no current instrument is capable to measure the {\it emitted} $\Pi_L$ but only the final detected one; (ii) our new method can be used in each energy range without any limitation (apart from ALP properties), so that photon polarization can be measured also above $\sim (10 - 100) \, \rm MeV$ -- which is realistically the current technological upper limit~\cite{polLimit, eastrogam1}. Note that, due to the way our method works, the higher the instrument energy resolution, the more accurate the measure of $\Pi_L$. Our method is perfectly correct when no photon absorption is present. As a result, our strategy can be used up to $\sim 500 \, \rm GeV$ for redshift $z \lesssim 0.05$ or $\sim 100 \, \rm GeV$ for $z \lesssim 0.5$. Nevertheless, if photon absorption -- mainly due to the EBL~\cite{franceschinirodighiero,dgr2013,gprt} -- is not huge, one can threat it as a perturbation of the spectrum. In such a situation absorption and photon-ALP interaction can approximately be considered as independent phenomena -- which is however false and generate wrong results in the general case -- thus gaining a factor of $2-5$ about the upper energy limit of our method: which can be performed by EBL-correcting the observed data. Instead, for totally different sources in our Galaxy the upper limit could be raised up to $\sim 100 \, \rm TeV$ since absorption is negligible.

We want to stress that the above applications are only some examples which demonstrate the feasibility, importance and power of our method to measure {\it emitted} photon polarization. Even if all the model parameters have been chosen inside physically reasonable bounds, many other possibilities can be explored (see also~\cite{noteFabian}). Still, some caveat must be taken into account.  The astrophysical systems under consideration possess some degree of uncertainty concerning the strength and morphology of the magnetic fields and the intensity and shape of the electron number densities, which may affect the final observed spectra. The same conclusion can be inferred from Eq.~(\ref{sync}), which represents an average luminosity of a peculiar blazar class for a particular choice of the entering parameters. The exploration of the whole parameter space is beyond the scope of this paper, but we have considered a variation of the parameters within physically consistent bounds by assuming also different models concerning magnetic fields and electron number densities. We obtain qualitatively similar results. Yet, even if the real spectra were different from those reported in the previous figures, 
what remains unchanged is the possibility of measuring the initial photon degree of linear polarization by means of the method presented above because what matters is only the observation of a survival/conversion probability with pseudo-oscillatory behavior. Thus, the method is robust with respect to a deviation from the assumed parameters. In order to extend the analysis, we plan to explore other scenarios in the future and to improve the fitting method by using a bayesian analysis. Still, no substantial change is expected.

Note that there exists no ambiguity between ALP-induced irregularities and other phenomena: possible lines could anyway be detected and subtracted.

\section{Conclusions}

After the theoretical demonstration of the possibility of measuring the {\it initial} photon degree of linear polarization $\Pi_L$ by knowing $P_{\gamma \to \gamma}$ only (in the current situation $P_{\gamma \to a}= 1-P_{\gamma \to \gamma}$) in the presence of photon-ALP interaction and no photon absorption (the photon-ALP system is isolated), we have shown that this possible measure is not only a theoretical experiment but it can practically be realized starting from spectral data for energies up to $\sim 100 \, \rm GeV$, when photon absorption is negligible. Whenever the optical depth is not huge i.e. $\tau_{\gamma} \lesssim 1$, $\Pi_L$ can still be approximately inferred.

Obviously, ALPs must exist and photon-ALP interaction must be efficient to implement our proposal. Nevertheless, ALPs are widely justified both theoretically and phenomenologically. Moreover, two strong astrophysical hints of ALP existence have been pointed out~\cite{trgb2012,grdb} plus an additional recent one~\cite{wang}. ALPs are now considered among the best candidates to constitute the dark matter~\cite{preskill,abbott,dine,arias2012} and are currently searched both in laboratory (e.g. ALPS II~\cite{alps2}) and, through ALP-induced astrophysical effects, from ground-based  observatories (Imaging Atmospheric Cherenkov Telescopes (IACTs) like HESS~\cite{hess}, MAGIC~\cite{magic}, VERITAS~\cite{veritas} and CTA~\cite{CTAsens}) and space telescopes (such as Swift~\cite{swift}, e-ASTROGAM~\cite{eastrogam1}, {\it Fermi}/LAT~\cite{fermiSens}).

In this paper we have only considered three examples (in the X-ray, in the MeV and in the GeV band) to demonstrate the feasibility, importance and power of our method, which does not need any new device to be implemented but just an additional analysis of existing or planned data. Since some parameters of the astrophysical systems considered in this paper, such as the strength and morphology of the magnetic fields, are not strongly constrained, a deviation from the reported spectra is possible. Yet, even different parameters produce a pseudo-oscillatory behavior of the survival/conversion probability. Therefore, the proposed method is robust: the initial photon degree of linear polarization can be measured even in the absence of a strong constraint of the parameters, since what matters is only the existence of a pseudo-oscillatory behavior of the survival/conversion probability.

In conclusion, thanks to our method in the presence of photon-ALP interaction, all observatories that measure the source observed flux only become also {\it polarimeters}.

\section*{Acknowledgments}

The author thanks Marco Roncadelli and Fabrizio Tavecchio for discussions. The work of the author is supported by a contribution from the grant ASI-INAF 2015-023-R.1.

\end{document}